\newtheorem{theorem}{Theorem}[section]
\newtheorem{definition}{Definition}[section]
\newtheorem{corollary}{Corollary}[section]
\newtheorem{rmq}{Remark}[section]
\newcommand{\ord}{\texttt{ord}}
\numberwithin{equation}{section}
\title{Vectorial FCSR constructed on totally ramified extension of the $p$-adic numbers}
\author{Abdelaziz Marjane,\\
LAGA, UMR CNRS 7539, Universit\'e Paris 13, Villetaneuse, France}
\date{}
\begin{document}
\maketitle
\begin{abstract}In this paper, we introduce a vectorial conception of $d$-
FCSRs to build these registers over any finite field. We describe the
structure of $d$-vectorial FCSRs and we develop an analysis to obtain
basic properties like periodicity and the existence of maximal length
sequences. To illustrate these  vectorial $d$-FCSRs, we present simple examples and we
compare with those of Goresky, Klapper and Xu.\\
\textbf{Keywords}: LFSR, FCSR, vectorial FCSR, d-FCSR, sequences, periodicity, $p$-adic, $\pi$-adic, maximal period.

\end{abstract}
\section{Introduction}

Linear Feedback Shift Register (LFSR) sequences are used in many applications in Cryptography and Telecommuncations (see Figure \ref{LFSR}. In fact, most of Pseudo-Randomn Generators are based on LFSR. To study LFSR sequences, we use the ring of the power series with coefficients in the finite fields $\mathbb{F}_{p^{n}}$, denoted by  $\mathbb{F}_{p^{n}}[X]$. An output sequence $\underline{a}=(a_{0},a_{1},\cdots,a_{i},\cdots)$ is associated to the power serie $a(X)=a_{0}+a_{1}X+\cdots+a_{i}X^{i}+\cdots$ and we find that $a(X)$ is a rational fraction in $\mathbb{F}_{p^{n}}(X)$ of the form $\displaystyle\frac{s(X)}{q(X)}$, where $q(X)$ is a polynomial defined by the LFSR and called the connection polynomial. So, we obtain the basic properties of $\underline{a}$ like periodicity and distributional properties. For example, the period of $\underline{a}$ divides the order of $X$ modulo $q(X)$.
\begin{figure}[!h]
\begin{center}
\includegraphics[height=3cm,width=7cm]{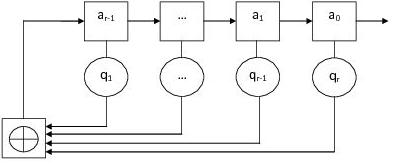}
\end{center}
\caption{Linear Feedback Shift Register or LFSR.}
\label{LFSR}
\end{figure}

Feedback with Carry Shift Registers are a class of non-linear generators and were first introduced by Goresky and Klapper in 1993 (see \cite{Goresky1993,Goresky1997}). An FCSR is constructed like an LFSR but we add a memory cell or a "carry" (see Figure \ref{FCSR}). To analyse FCSR sequences, we use the ring of the $p$-adic integers, denoted by $\mathbb{Z}_{p}$. An output sequence is associated to its $p$-adic expansion $\alpha=a_{0}+a_{1}p+\cdots+a_{i}p^{i}+\cdots$ and we find that $\alpha$ is a rational number in $\mathbb{Q}$ of the form $\displaystyle\frac{s}{q}$ where $q=-1+q_{1}p+\cdots+q_{r}p^{r}$ is an integer defined by the connection coefficients of the register and called the connection integer. So, we obtain the basic properties of FCSR sequences like periodicity, distributional properties, existence of maximal length sequences. As for LFSRs, the period of $\underline{a}$ divides the order of $p$ modulo $q$.
\begin{figure}[!h]
\begin{center}
\includegraphics[height=3cm,width=7cm]{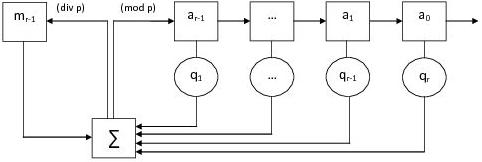}
\end{center}
\caption{Feedback with Carry Shift Register or FCSR.}
\label{FCSR}
\end{figure}

In 1994, Goresky and Klapper introduced $d$-FCSRs (see \cite{Goresky1994}). These registers are like FCSRs but we must add $d$ memories and a "jump" in the carry register (see Figure \ref{d-FCSR}). These results in the analysis through the use of the ring of the $\pi$-adic integers, denoted by $\mathbb{Z}_{p}[\pi]$ (here $\pi=\sqrt[d]{p}$); or the totally ramified extension of $\mathbb{Z}_{p}$, i.e. the ramification index is $d$. An output sequence is associated to the $\pi$-adic integer $\alpha=a_{0}+a_{1}\pi+\cdots+a_{i}\pi^{i}+\cdots$ and we find that $\alpha$ is a fraction in $\mathbb{Q}(\pi)$ of the form $\displaystyle\frac{s}{q}$ where $q=-1+q_{1}\pi+\dots+q_{r}\pi^{r}$ is an element in $\mathbb{Z}[\pi]$ called the connection integer. So Goresky and Klapper obtain some properties of $d$-FCSR sequences. Under some conditions, the period is maximal, i.e. it is the order of $p$ modulo $|\mathtt{N}(q)|$, where $\mathtt{N}(q)$ is the norm of $q$ in $\mathbb{Q}(\pi)$ as a $\mathbb{Q}$-vector space of dimension $d$.
\begin{figure}[!h]
\begin{center}
\includegraphics[height=3cm,width=9cm]{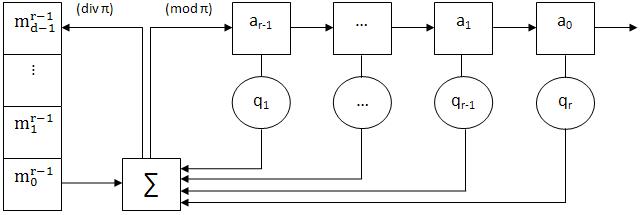}
\end{center}
\caption{$d$-FCSR.}
\label{d-FCSR}
\end{figure}

To extend the construction of FCSR to any finite fields $\mathbb{F}_{p^{n}}$, the authors introduced the notion of a vectorial conception in 2010 (see \cite{marjane2010,allailou2010}). In fact, FCSR and $d$-FCSR are mainly developped on $\mathbb{F}_{p}$ (where $p$ is prime). The analysis of vectorial FCSR is based on $\mathbb{Z}_{p^{n}}$ the absolutely unramified extension of $\mathbb{Z}_{p}$. The fields of $p^{n}$ elements can be constructed as the quotient ring $\mathbb{F}_{p}[X]/(P)$ where $P(X)$ is a primitive polynomial over $\mathbb{F}_{p}$. Consider a root of $P$ denoted by $\beta$, the field $\mathbb{F}_{p^{n}}$ is the smallest field extension of $\mathbb{F}_{p}$ and containing $\beta$, denoted by $\mathbb{F}_{p}[\beta]$. On the other hand, $\mathbb{Z}_{p^{n}}$ can be constructed as a quotient ring $\mathbb{Z}_{p}[X]/(P)$. An output sequence $\underline{a}$ of Vectorial FCSR (see Figure \ref{VFCSR}) is first decomposed as a vector of $n$ $p$-ary sequences denoted by $(\underline{a}_{0},\cdots,\underline{a}_{n-1})$. Each component $\underline{a}_{j}$ is associated to its $p$-adic expansion $\alpha_{j}$. So $\underline{a}$ is associated to a vector $\alpha=(\alpha_{0},\cdots,\alpha_{n-1})$ in $(\mathbb{Z}_{p})^{n}$ and we find that $\alpha$ is a rational vector in $\mathbb{Q}^{n}$ of the form $\displaystyle\frac{1}{\mathtt{N}(q)}\mathbb{Z}^{n}$. Recall that $\mathtt{N}(q)$ is the norm of $q$ in $\mathbb{Q}[X]/(P)$ as a $\mathbb{Q}$-vector space of dimension $n$. The connection integer $q$ is an element of $\mathbb{Z}[\beta]$. Then, we obtain the basic properties of VFCSR sequences: all VFCSR sequences are eventually periodic and its period divides the order of $p$ modulo $|\mathtt{N}(q)|$.\\
\begin{figure}[!h]
\begin{center}
\includegraphics[height=7cm,width=9cm]{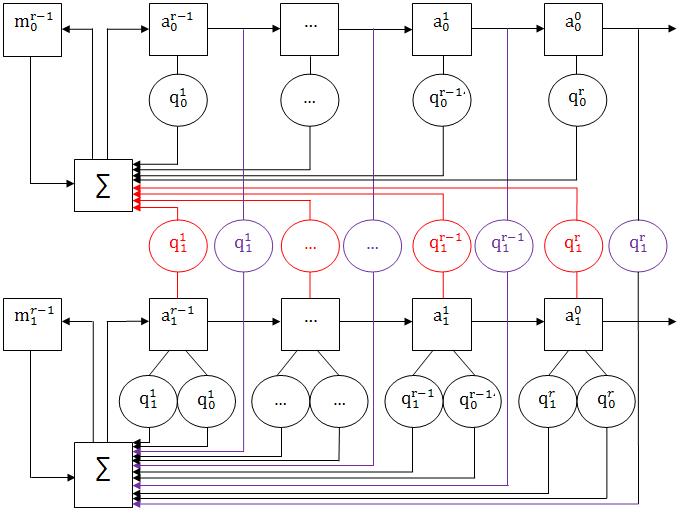}
\caption{Vectorial FCSR.}
\label{VFCSR}
\end{center}
\end{figure}

In 1999, Klapper and Xu have presented a generalization of FCSR and LFSR called AFSR (see \cite{klapper1999}). Algebraic FSR are constructed over an integral and commutative
ring $R$ and they consider an element $\pi \in R$. We assume that $R/\pi R$
is a finite field. The construction is the same as that of FCSR. Also, if we
take $R=\mathbb{F}_{2^{n}}[x]$ and $\pi =x$ where $x$ is an indeterminate,
we obtain an LFSR. If we take $R=\mathbb{Z}$ and $\pi =p$ with $p$
prime, you have $p$-ary FCSR. If we take $R=\mathbb{Z}[\sqrt[d]{2}]$ and $\pi
=\sqrt[d]{2}$, we have the case of $d$-FCSR. This construction generalizes
all FSR over an algebraic structure. Any choice of $\pi $ defines a new topology on $%
R $. They consider the completion of $R$ for the $\pi $-adic
topology. If $R$ is noetherian, this completion is simply the set of power
series $a_{0}+a_{1}\pi+\cdots+a_{i}\pi ^{i}+\cdots$. For analysis, we have a
correspondance between this set of power series and the set of sequences
over $R/\pi R$. Furthermore, the most important results are that the output
sequence is the $\pi $-adic expansion of $\displaystyle\frac{p}{q}$ which is in the
fraction field of $R$ and $q=-1+q_{1}\pi+\cdots+q_{r}\pi^{r}$ is the
connection integer. Under more assumptions, the period is the order of $\pi $ modulo $q$. All these results about algebraic FSR and their properties are clearly described in a book \cite{goresky2009}.

In this paper, we extend the construction of $d$-FCSR to any finite fields $\mathbb{F}_{p^{n}}$. We adapt the vectorial conception to $d$-FCSR and we call these new registers $d$-Vectorial FCSR. The main idea is to consider $P(X)$ a primitive polynomial on $\mathbb{F}_{p}$ and $\overline{\beta}$ a root of $P$. The fields $\mathbb{F}_{p^{n}}$ is isomorphic to $\mathbb{F}_{p}\big[\overline{\beta}\big]$. We decompose an output sequence $\underline{a}$ to $n$ sequences $(\underline{a}_{0},\cdots,\underline{a}_{n-1})$ on $\mathbb{F}_{p}$ and for all $0\leq j\leq n-1$, we consider $\underline{a}_{k,j}=(a^{j}_{k},a^{j}_{k+d},a^{j}_{k+2d},\cdots)$ the $d$-decimations of the $k$-shifts (here $0\leq k\leq d-1$) of $\underline{a}_{j}$ and their $p$-adic expansion. We obtain a collection of $nd$ sequences on $\mathbb{F}_{p}$ and $\alpha^{'}$ a vector in $(\mathbb{Z}_{p})^{nd}$. The first main result is:
\begin{theorem}
The $p$-adic vector $\alpha^{'}$ is a rational vector in $\displaystyle\frac{1}{\mathtt{N}(q)}\mathbb{Z}^{nd}$, where $q$ is the connection integer in $\mathbb{Z}[\pi,\beta]$ and $\mathtt{N}$ is the norm in $\mathbb{Q}(\pi,\beta)$ as a $\mathbb{Q}$-vector-space of dimension $nd$.
\end{theorem}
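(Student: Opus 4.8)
The plan is to reduce the statement to a piece of linear algebra over $\mathbb{Q}$, exactly as in the scalar $d$-FCSR and the undecimated vectorial FCSR cases, but now carried out in the compositum $\mathbb{Q}(\pi,\beta)$. The starting point is the fundamental relation coming from the register recurrence: reading off the feedback-with-carry equations in the $\pi$-adic ring $\mathbb{Z}_{p}[\pi,\beta]$ (where $\pi^{d}=p$ and $\beta$ generates the unramified part), the output $\pi$-adic integer $\alpha=\sum_{i}a_{i}\pi^{i}$, with $a_{i}\in\mathbb{F}_{p^{n}}$, satisfies $q\alpha=s$ for some $s\in\mathbb{Z}[\pi,\beta]$, where $q=-1+q_{1}\pi+\cdots+q_{r}\pi^{r}$ is the connection integer. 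This is the AFSR/$d$-FCSR identity, and I would establish it first, essentially word for word as in Goresky--Klapper, the only change being that the coefficients now live in $\mathbb{Z}[\beta]$ rather than in $\mathbb{Z}$.

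Next I would make the double decomposition explicit and match it to a $\mathbb{Q}$-basis. Since $\pi^{d}=p$, grouping the $\pi$-adic expansion by the residue of the exponent modulo $d$ gives $\alpha=\sum_{k=0}^{d-1}\pi^{k}\sum_{m}a_{md+k}p^{m}$, and expanding each $a_{md+k}\in\mathbb{F}_{p^{n}}$ over the basis $1,\beta,\dots,\beta^{n-1}$ yields
\[
\alpha=\sum_{k=0}^{d-1}\sum_{j=0}^{n-1}\pi^{k}\beta^{j}\,\alpha_{k,j},\qquad \alpha_{k,j}=\sum_{m}a^{j}_{md+k}p^{m}\in\mathbb{Z}_{p}.
\]
These $nd$ coordinates are exactly the components of $\alpha'$. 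The family $\{\pi^{k}\beta^{j}:0\le k\le d-1,\ 0\le j\le n-1\}$ is a $\mathbb{Q}$-basis of $\mathbb{Q}(\pi,\beta)$ (ramification $d$ times residue degree $n$ gives degree $nd$), and because there is a single prime of $\mathbb{Q}(\pi,\beta)$ above $p$, totally ramified of degree $d$ over an unramified extension of degree $n$ with $ef=nd$, the local degree equals the global degree; hence the same family is a $\mathbb{Z}_{p}$-basis of the completion of $\mathbb{Z}_{p}[\pi,\beta]$. The display above is therefore simultaneously the expansion of $\alpha$ in this $\mathbb{Z}_{p}$-basis and, via the embedding $\mathbb{Q}(\pi,\beta)\hookrightarrow\mathbb{Q}_{p}(\pi,\beta)$, in the $\mathbb{Q}$-basis.

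Now $\alpha=s/q$ lies in $\mathbb{Q}(\pi,\beta)$, so it also has rational coordinates $c_{k,j}\in\mathbb{Q}$ in the same basis; by uniqueness of coordinates in the $\mathbb{Q}_{p}$-vector space $\mathbb{Q}_{p}(\pi,\beta)$ we get $\alpha_{k,j}=c_{k,j}$ for all $k,j$. It remains to control the denominators of the $c_{k,j}$. Let $M_{q}$ be the matrix, in the basis $\{\pi^{k}\beta^{j}\}$, of multiplication by $q$ on $\mathbb{Q}(\pi,\beta)$; since $\pi^{d}=p\in\mathbb{Z}$, $\beta$ satisfies the monic integer relation $P(\beta)=0$, and $q\in\mathbb{Z}[\pi,\beta]$, all structure constants are integers and $M_{q}\in\mathrm{Mat}_{nd}(\mathbb{Z})$ with $\det M_{q}=\N(q)$. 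Writing $\vec{\alpha}$ and $\vec{s}$ for the coordinate vectors of $\alpha$ and $s$, the relation $q\alpha=s$ reads $M_{q}\vec{\alpha}=\vec{s}$ with $\vec{s}\in\mathbb{Z}^{nd}$, so Cramer's rule (equivalently $M_{q}^{-1}=\tfrac{1}{\det M_{q}}\,{}^{t}\!\comat(M_{q})$) gives
\[
\vec{\alpha}=\frac{1}{\N(q)}\,{}^{t}\!\comat(M_{q})\,\vec{s}\in\frac{1}{\N(q)}\mathbb{Z}^{nd},
\]
which is exactly the claim, since $\alpha'=\vec{\alpha}$.

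The \textbf{main obstacle} I anticipate is not the final linear algebra but the two structural facts underpinning it: proving the fraction relation $q\alpha=s$ cleanly from the $d$-vectorial recurrence with carries, and justifying that $\{\pi^{k}\beta^{j}\}$ is genuinely a basis both globally over $\mathbb{Q}$ and locally over $\mathbb{Q}_{p}$ --- that is, that adjoining the ramified $\pi$ and the unramified $\beta$ yields a field of degree exactly $nd$ over $\mathbb{Q}$ in which $p$ has a unique prime above it with $e=d$ and $f=n$, so that $ef=nd$ is both the global and the local degree. Once this degree and local--global compatibility is secured, the identification $\alpha_{k,j}=c_{k,j}$ and the comatrix bound on denominators are routine.
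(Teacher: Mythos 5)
Your proposal is correct, but it is organized differently from the paper's own argument, so a comparison is worth making. The paper never isolates the single $\pi$-adic identity $q\alpha=s$: instead it works coordinate-wise from the start, deriving the integral linear system \eqref{E11} (and its $\pi$-adic analogue \eqref{E12}) directly from the register recurrences \eqref{E4}, \eqref{E5}, \eqref{E5-2}, so that the $p$-adic vector $\alpha^{'}$ satisfies $M^{'}\alpha^{'}=y^{'}$ with $M^{'}\in\mathcal{M}_{nd}(\mathbb{Z})$, $y^{'}\in\mathbb{Z}^{nd}$; only afterwards is $M^{'}$ recognized as the matrix of multiplication by $-q$ in the basis $\mathcal{B}$, giving $\det(M^{'})=\mathtt{N}_{\mathbb{Q}}^{\mathbb{Q}(\pi,\beta)}(-q)\equiv 1 \pmod p$, and the comatrix identity then yields $\alpha^{'}\in\frac{1}{\det(M^{'})}\mathbb{Z}^{nd}$ (the sign discrepancy $\mathtt{N}(-q)=(-1)^{nd}\mathtt{N}(q)$ is immaterial for the lattice). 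In this bottom-up route the rationality of the $\alpha_{k,j}$ falls out automatically from solving an invertible integer system, so no appeal to the completion is needed. Your top-down route --- first the Goresky--Klapper telescoping identity $q\alpha=s$ in $\mathbb{Z}_{p^{n}}[\pi]$, then the decomposition of $\alpha$ along $\{\pi^{k}\beta^{j}\}$ --- is shorter and reuses AFSR theory, but it genuinely requires the extra step you flag: that $\{\pi^{k}\beta^{j}\}$ is simultaneously a $\mathbb{Q}$-basis of $\mathbb{Q}(\pi,\beta)$ and a $\mathbb{Z}_{p}$-basis of $\mathbb{Z}_{p^{n}}[\pi]$, so that the $p$-adic coordinates $\alpha_{k,j}$ coincide with the rational coordinates of $s/q$. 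That compatibility is available: it is exactly what Section 2 of the paper establishes (the argument that $n=n^{'}=n^{''}$, hence $[\mathbb{Q}(\pi,\beta):\mathbb{Q}]=nd$, and that the completion $\mathbb{Z}_{p^{n}}[\pi]$ is free of rank $nd$ over $\mathbb{Z}_{p}$), so your proof closes. What you lose relative to the paper is the explicit machinery: the matrices $M$ and $M^{'}$, the intermediate $\pi$-adic level ($M\in\mathcal{M}_{n}(\mathbb{Z}[\pi])$ with $\det(M^{'})=\mathtt{N}_{\mathbb{Q}}^{\mathbb{Q}(\pi)}(\det(M))$), and the componentwise expression $\alpha_{k,j}=y^{'}_{k,j}/N^{'}$ are all reused later --- in the periodicity theorem and in Section 6, where maximal-period examples are found by searching for primes represented by the explicit $nd$-form defined by $M^{'}$ --- so the paper's heavier derivation produces artifacts your cleaner argument would have to reconstruct.
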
 
So for all $k=0,\cdots,d-1$ and $n=0,\cdots,n-1$, the "sub-sequences" $\underline{a}_{k,j}$ are eventually periodics, then $\underline{a}$ is eventually periodic (see Theorem \ref{thm1}, Section 4 and 5). In the following, we describe this analysis in detail. Note that in \cite{klapper1999} (p.19-25), Klapper and Xu give three examples to describe the different cases of AFSR and one of them corresponds to $d$-VFCSR. In fact, AFSR generalize LFSR, $p$-ary FCSR, $p$-ary $d$-FCSR, VFCSR and $d$-VFCSR, but the analysis of AFSR is very formal and the period of an output sequence of AFSR divides $\ord_{q}(\pi)$ and the maximal period is $\ord_{q}(\pi)$. In this paper, we give a practical and easily implementable description of $d$-VFCSRs and an analysis reduced to $p$-adic framework and specific to these registers. We show also the second main result: the period divides $d.\ord_{\mathtt{N}(q)}(p)$ and the maximal period is this number (see Theorem \eqref{thm2}). It is more easy to compute the order of an integer modulo another integer. Furthermore, to find maximal period sequences, we have to find prime
numbers represented by the $nd$-form defined by $\mathtt{N}(q)$.

The paper is organized as follows : in Section 2, we set the needed algebraic
background. We introduce $d$-VFCSR in Section 3, the analysis is given in Section 4 and we
study the periodicity in Section 5. Finally, we illustrate the $d$-VFCSRs by
three examples in Section 6.

\section{Preliminary Algebraic}

Let $p$ be a prime number, $d,n$ two non negative integers and $\pi=\sqrt[d]{p}$. The polynomial $X^{d}-p$ is irreducible over $\mathbb{Q}$ by the criterion of Eisenstein. Since it is a monic polynomial and $\mathbb{Z}$ is a factorial ring, then $X^{d}-p$ is irreducible over $\mathbb{Z}$. $\mathbb{Q}[X]$ is an euclidean ring because $\mathbb{Q}$ is a field. The surjective homomorphism of rings defined by $\mathbb{Q}[X]\rightarrow \mathbb{Q}[\pi],R(X)\mapsto R(\pi)$ induces an isomorphism of rings between $\mathbb{Q}[X]/(X^{d}-p)$ and $\mathbb{Q}[\pi]$. $X^{d}-p$ is irreducible in the euclidean ring $\mathbb{Q}[X]$, so the ideal $(X^{d}-p)$ generated by $X^{d}-p$ is maximal and $\mathbb{Q}[\pi]$ is a field. It is also the smallest field extension of $\mathbb{Q}$ containing $\pi$ denoted by $\mathbb{Q}(\pi)$. It is a $\mathbb{Q}$-vector space of dimension $d$ with the canonical basis $\left\{1,\pi,\cdots,\pi^{d-1}\right\}$. 

The surjective homomorphism of rings defined by $\mathbb{Z}[X]\rightarrow \mathbb{Z}[\pi],$ $R(X)\mapsto R(\pi)$ induces an isomorphism between $\mathbb{Z}[X]/(X^{d}-p)$ and $\mathbb{Z}[\pi]$. Here $\mathbb{Z}[\pi]$ is a free $\mathbb{Z}$-module of rank $n$ with the canonical basis $\left\{1,\pi,\cdots,\pi^{d-1}\right\}$. The fraction field of $\mathbb{Z}[\pi]$ is $\mathbb{Q}[\pi]$ which is an algebraic number field of degree $d$ and $\mathbb{Z}[\pi]$ is an order of $\mathbb{Q}[\pi]$. All elements of $\mathbb{Z}[\pi]$ written in the form $q_{0}+q_{1}\pi+\cdots+q_{d-1}\pi^{d-1}$, where $q_{i}\in \mathbb{Z}.$
The inclusion $\mathbb{Z}\hookrightarrow \mathbb{Z}[\pi]$ induces an isomorphism of rings between $\mathbb{Z}/p\mathbb{Z}$ and $\mathbb{Z}[\pi]/\pi\mathbb{Z}[\pi]$. $\mathbb{Z}/p\mathbb{Z}$ is a finite field, then $(\pi)$ is maximal (and prime).

The completion of $\mathbb{Q}$ for the $p$-adic valuation is $\mathbb{Q}_{p}$ the field of $p$-adic numbers. Its valuation ring is $\mathbb{Z}_{p}$ the ring of the $p$-adic integers. All elements of $\mathbb{Z}_{p}$ can be developped as power series of the form $\underset{i\in \mathbb{N}}{\sum} a_{i}p^{i}$, where $a_{i}\in \mathbb{F}_{p}$. Up to isomorphism, the ring $\mathbb{Z}_{p}$ is the unique complete discrete valuation ring of the residue field $\mathbb{F}_{p}$.

The completion of $\mathbb{Q}(\pi)$ for the $\pi$-adic valuation is $\mathbb{Q}_{p}(\pi)$ which is a $\mathbb{Q}_{p}$-vector space of dimension $d$ and a totally ramified extension of $\mathbb{Q}$. Its valuation ring is $\mathbb{Z}_{p}[\pi]$ which is a free $\mathbb{Z}_{p}$-module of rank $d$. All elements of $\mathbb{Z}_{p}[\pi]$ can be developped as power series of the form $\underset{i\in \mathbb{N}}{\sum}a_{i}\pi^{i}$. The residue field of $\mathbb{Z}_{p}[\pi]$ is $\mathbb{F}_{p}$.

Consider $\mathbb{F}_{p^{n}}$ the field of $p^{n}$ elements. $\mathbb{F}_{p^{n}}$ is isomorphic to the quotient field $\mathbb{F}_{p}[X]/(\overline{P})$ where $\overline{P}$ is a primitive polynomial of degree $n$ with coefficients in $\mathbb{F}_{p}$. Without loss of generality, we take $\overline{P}$ of the form $X^{n}-\cdots-1$. Let $P$ be the canonical lift of $\overline{P}$ in $\mathbb{Z}[X]$. We identify $P$ to $\overline{P}$. Consider $\beta\in \mathbb{C}$ a root of $P$ and $\overline{\beta}$ its reduction modulo $p$. The field $\mathbb{F}_{p^{n}}$ is isomorphic to $\mathbb{F}_{p}[\overline{\beta}]$.

Since $\mathbb{Z}$ is factorial, $p\mathbb{Z}$ is prime ideal and $P$ is reducible in $\mathbb{F}_{p}[X]$, then $P$ is irreducible in $\mathbb{Q}[X]$. As $P$ is a monic polynomial, then $P$ is also irreducible in $\mathbb{Z}[X]$. By the same arguments of the precedent paragraph, we have an isomorphism of field $\mathbb{Q}[X]/(P)\cong \mathbb{Q}[\beta]$ and an isomorphism of ring $\mathbb{Z}[X]/(P)\cong \mathbb{Z}[\beta]$. Recall that $\mathbb{Q}[\beta]$ is a $\mathbb{Q}$-vector space of dimension $n$ with the canonical basis $\left\{1,\beta,\cdots,\beta^{n-1}\right\}$ and $\mathbb{Z}[\beta]$ is a free $\mathbb{Z}$-module of rank $n$ and an order of the number field $\mathbb{Q}[\beta]$. All elements of $\mathbb{Z}[\beta]$ written in the form $a_{0}+a_{1}\beta+\cdots+a_{n-1}\beta^{n-1}$, where $a_{i}\in \mathbb{Z}$.

The completion of $\mathbb{Q}(\beta)$ for the $p$-adic valuation is $\mathbb{Q}_{p^{n}}=\mathbb{Q}_{p}[X]/(P)$. The field extension $\mathbb{Q}\supseteq\mathbb{Q}(\beta)$ is of degree $n$ and $\mathbb{Q}_{p}[X]/(P)$ is an absolutely unramified extension of $\mathbb{Q}_{p}$. The valuation ring of $\mathbb{Q}_{p^{n}}$ is $\mathbb{Z}_{p^{n}}=\mathbb{Z}_{p}[X]/(P)$. Up to isomorphism, this ring is the unique complete discrete valuation ring of the residue field $\mathbb{F}_{p^{n}}$ and a free $\mathbb{Z}_{p}$-module of rank $n$. Consider the following extensions of fields and rings:
$$\xymatrix{ &\mathbb{Q}(\pi)\ar[rd]^{n^{''}}&&&\mathbb{Z}[\pi]\ar[rd]^{n^{'}}&&\\
\mathbb{Q}\ar[ru]^{d}\ar[rd]^{n}&& \mathbb{Q}(\beta,\pi),&\mathbb{Z}\ar[ru]^{d}\ar[rd]^{n}&&\mathbb{Z}[\pi,\beta]\\
&\mathbb{Q}(\beta)\ar[ru]&&&\mathbb{Z}[\beta]\ar[ru]&&}$$
and $\mathbb{F}_{p}\hookrightarrow\mathbb{Z}[\pi,\beta]/(\pi)$, where $\mathbb{Z}[\pi,\beta]/(\pi)$ is the set of the linear combinations $\underset{0\leq j\leq n-1}{\sum}\overline{a_{j}}(\beta \, \mathtt{mod} \, \pi)^{j}$ where $\overline{a_{j}}\in \mathbb{F}_{p}$. Here $\mathbb{Z}[\pi,\beta]/(\pi)$ is the quotient of the ring $\mathbb{F}_{p}[X]$ by the minimal polynomial of $\beta \, \mathtt{mod} \, \pi$. Note that $P(\beta)=0$ implies $(P \, \mathtt{mod} \, \pi)(\beta \, \mathtt{mod} \, \pi)=0$. Since $P \, \mathtt{mod} \, \pi= \overline{P}$, then $\overline{P}(\beta \, \mathtt{mod} \, \pi)=0$ and the minimal polynomial of $\beta \, \mathtt{mod} \, \pi$ divides $\overline{P}$ in $\mathbb{F}_{p}[X]$. As $\overline{P}$ is primitive, then $\overline{P}$ is the minimal polynomial of $\beta \, \mathtt{mod} \, \pi$. Then $\mathbb{Z}[\pi,\beta]/(\pi) \cong \mathbb{F}_{p^{n}}$.

Since $P(\beta)=0$, then $n^{'}\leq n$. $n^{''}$ is the degree of the extension $\mathbb{Q}(\pi,\beta)\supseteq \mathbb{Q}(\pi)$, then it is the degree of the minimal polynomial of $\beta$ over $\mathbb{Q}(\pi)$ denoted by $P_{0}$. There exists also $P_{1}$ in $\mathbb{Z}[\pi,X]$ of degree $n^{' }$ such that $P_{1}(\beta)=0$. As $Q(\pi,X)$ is euclidean, $P_{0}$ divides $P_{1}$, then $n^{''}\leq n^{'}$. If we multiply $P_{0}$ by $\lambda$ an element of $\mathbb{Z}[\pi]$ such that $\lambda P_{0}$ is in $\mathbb{Z}[\pi,X$] and if we reduce $\lambda P_{0}$ modulo $\pi$, we obtain an anihilator polynomial of $\beta \, \mathtt{mod} \, \pi$ of degree $k\leq n^{''}$ with coefficients in $\mathbb{F}_{p}$. Since $\overline{P}$ is the minimal polynomial of $\beta \, \mathtt{mod} \,\pi$ in $\mathbb{F}_{p}$, then $\overline{P}$ divides $(\lambda P_{0}) \, \mathtt{mod} \, \pi$ and $n \leq k$. Then we have $n=n^{'}=n^{''}$ and we have the following relation: $\big[\mathbb{Q}[\pi,\beta]:\mathbb{Q}[\pi]\big]=\big[\mathbb{Z}[\pi,\beta]:\mathbb{Z}[\pi]\big]=\big[\mathbb{Z}[\pi,\beta]/(\pi):\mathbb{F}_{p}\big]=n$.

$\mathbb{Z}[\pi,\beta]$ is a free $\mathbb{Z}$-module of rank $nd$ with the canonical basis\\ $\left\{1,\cdots,\pi^{i}\beta^{j},\cdots,\pi^{d-1}\beta^{n-1}\right\}$ denoted by $\mathcal{B}$. All elements of $\mathbb{Z}[\pi,\beta]$ written as $\sigma=\sigma_{0,0}+\cdots+\sigma_{i,j}\pi^{i}\beta^{j}+\cdots+\sigma_{d-1,n-1}\pi^{d-1}\beta^{n-1}$, where $\sigma_{i,j}\in \mathbb{Z}$.
The reduction modulo $\pi$ of $\sigma$ is $\sigma\, \mathtt{mod}\,\pi=\underset{j=0}{\overset{n-1}{\sum}}(\sigma_{0,j}\, \mathtt{mod}\,p)(\beta \, \mathtt{mod}\, \pi)^{j}$.  

The completion of $\mathbb{Q}(\pi,\beta)$ for the $\pi$-adic valuation is $\mathbb{Q}_{p^{n}}(\pi)$ the smallest extension field of $\mathbb{Q}_{p}[X]/(P)$ containing $\pi$ and is a totally ramified extension of $\mathbb{Q}_{p^{n}}$. Its valuation ring $\mathbb{Z}_{p^{n}}[\pi]$ is a free $\mathbb{Z}_{p}$-module of rank $nd$. The residue field of $\mathbb{Z}_{p^{n}}[\pi]$ is $\mathbb{F}_{p^{n}}$.
$$\xymatrix{ &\mathbb{Q}_{p}(\pi)\ar[rd]&&&&\mathbb{Z}_{p}[\pi]\ar[rd]&&\\
\mathbb{Q}_{p}\ar[ru]^{d}\ar[rd]^{n}&& \mathbb{Q}_{p^{n}}(\pi)&&\mathbb{Z}_{p}\ar[ru]^{d}\ar[rd]^{n}&&\mathbb{Z}_{p^{n}}[\pi]&\\
&\mathbb{Q}_{p^{n}}\ar[ru]&&&&\mathbb{Z}_{p^{n}}\ar[ru]&&}$$

Consider an extension field $K\subseteq L$. The norm of an element $q\in L$ over $K$ is the determinant of the linear transformation defined by the multiplication by $q$ in $L$ as a $K$-vector space. The norm is denoted by $\mathtt{N}_{K}^{L}(q)$. If $K\subseteq F\subseteq L$,  $\mathtt{N}_{K}^{L}(q)=\mathtt{N}_{K}^{F}(q)\mathtt{N}_{F}^{L}(q)$. In the following, we will consider the norm of $q$ an element of $\mathbb{Q}(\pi,\beta)$ over $\mathbb{Q}$. With respect to the basis $\left\{1,\pi,\cdots,\pi^{d-1}\right\}$, the matrix of the  multiplication by $q=\underset{k=0}{\overset{k=d-1}{\sum}}x_{k}\pi^{k}$ is given by the following matrix:
\begin{equation}
\left(
\begin{array}{ccccccc}
x_{0}  &  px_{d-1} &  \cdots &  px_{2} &  px_{1}\\
x_{1} &  x_{0}& \cdots & px_{3} & px_{2}\\
\vdots &  \vdots       &  \cdots &  \vdots&\vdots\\
x_{d-1} &   x_{d-2}     &         & x_{1}   &  x_{0}
\end{array}
\right).
\label{E}
\end{equation}
\section{Definition of $d$-VFCSR}
\subsection{Formalism}

To construct $d$-VFCSR, we keep the scheme of the binary $d$-FCSR built by Goresky and Klapper and we applicate the vectorial conception. 
Place ourself in the precedent algebraic background to redefine the space in which we calculate. Let $S=\left\{0,\pm 1,\cdots,\pm (p-1)\right\}$. We define the $d$-Vectorial FCSR by the following way:
\begin{definition}A $d$-vectorial feedback with carry shift register over $(\mathbb{F}_{p},P,\mathcal{B})$ of length $r$ with coefficient $q_{1},\cdots ,q_{r}\in S[\beta]$ is an automata or sequence generator whose state is an element $%
s=(a_{0},\cdots ,a_{r-1},m_{r-1})$ where $a_{i}\in \mathbb{F}_{p}\big[ \overline{\beta}\big]$
and $m_{r-1}\in \mathbb{Z}[\pi,\beta]$. We take the canonical lift of the collection of $a_{i}$ in $\mathbb{Z}%
[\beta]$, compute the element $\sigma_{r}=q_{1}a_{r-1}+\cdots+q_{i}a_{r-i}+\cdots+q_{r}a_{0}+m_{r-1}$, where all elements are expressed as vector in the basis $\mathcal{B}$.
Compute $a_{r}=\sigma_{r} \, (\mathtt{mod} \, p)$ where $\, (\mathtt{mod} \, p)$\footnote{$\sigma(\mathtt{mod}p)$ is the rest of the Euclidean division of $\sigma$ by $p$} applies coordinates
by coordinates following the basis $\mathcal{B}$. Take the canonical lift of $a_{r}$ in $\mathbb{Z}[\pi,\beta]$ and compute $m_{r}=\displaystyle\displaystyle\frac{%
\sigma_{r}-a_{r}}{\pi}=\sigma_{r} \,(\mathtt{div}\, p)$\footnote{$\sigma(\mathtt{div}p)$ is the quotient of the Euclidean division of $\sigma$ by $p$}. The feedback function is $f(s)=(a_{1},\cdots ,a_{r-1},a_{r},m_{r})$ and the output function is $g(x_{0},x_{1},\cdots ,x_{r-1},z)=x_{0}$.
The $d$-VFCSR generates an infinite output sequence $\underline{a}%
=(g(s),g(f(s)),g(f^{2}(s)),\cdots )=(a_{0},a_{1},a_{2},\cdots )$. The state $s$ is called the initial state, $q_{1},\cdots ,q_{r}$ are called the
coefficients of the recurrence and the infinite sequence $(m_{r-1},m_{r},\cdots )$ is called the memory sequence.
\end{definition}
Note that we choose connection coefficients not necessarily positive, because if we consider the inverse problem: what is the $d$-VFCSR outputs a given sequence? We remark that the sequence corresponds to a "rationnal" $\displaystyle\frac{s}{q}\in\mathbb{Q}(\pi,\beta)$ where $q\cong -1 \,\mathtt{mod}\,\pi$. The denominator $q$ constructs the $d$-VFCSR whose output is the given sequence (see Definition \eqref{D2}). So we have: 
$$\begin{array}{ll}
q&=-1+\underset{i=0}{\overset{i=d-1}{\sum}}\underset{j=0}{\overset{j=n-1}{\sum}}\delta_{i,j}\pi^{i}\beta^{j}
\textrm{ with $\delta_{i,j}\in\mathbb{Z}$ for $i\neq 0$ and $\delta_{0,j}\in p.\mathbb{Z}$}\\
q&=-1+\underset{i=0}{\overset{i=d-1}{\sum}}\underset{k=0}{\overset{k=r_{i,j}}{\sum}}\underset{j=0}{\overset{j=n-1}{\sum}}
\mathtt{sgn}(\delta_{i,j})\delta^{i,j}_{k}\beta^{j}\pi^{dk+i}\textrm{ with $\delta_{k}^{i,j}\in \left\{0,1,\cdots,p-1\right\}$} \end{array}$$
and $\delta_{0}^{0,j}=0$. We set $s=\max\left\{r_{i,j}\right\}$, $r=ds+d-1$ and $q^{dk+i}_{j}=\mathtt{sgn}(\delta_{i,j})\delta^{i,j}_{k}$ or $0$ for $k>r_{i,j}$, then:
$$\begin{array}{rrcl}
q&=-1+\underset{I=1}{\overset{I=r}{\sum}}&\underset{j=0}{\overset{j=n-1}{\sum}}
q^{I}_{j}\beta^{j}&\pi^{I}\textrm{ with $q_{j}^{I}\in S$},\\
q&=-1+\underset{I=1}{\overset{I=r}{\sum}}&q_{I}&\pi^{I}\textrm{ with $q_{I}\in S[\beta]$}.
\end{array}$$
The integer $r$ is the size of the $d$-VFCSR and the collection of $q_{I}$ are the connection coefficients. Note also that the connection integer does not give an unique $d$-VFCSR. In fact, we take a simple example to show this: fix $d=p=2$, $n=1$; for $q=-1+\pi=-1-(p-1)\pi+\pi^{3}$, the connection integer gives at least two $d$-VFCSR, one of size $1$ and an other of size $3$.

 Finally, the $p$-ary FCSRs correspond to $d$-VFCSRs with $d=1$ and $n=1$. The VFCSRs correspond to $d$-VFCSR with $d=1$ and $n>1$. The $p$-ary $d$-FCSRs correspond to $d$-VFCSR with $d>1$ and $n=1$. So $d$-VFCSRs generalize FCSRs, $d$-FCSRs and VFCSRs. The figure \eqref{d-VFCSR-Q} represents a $d$-VFCSR in the case $d=2$ and $n=2$.
\subsection{Vector calculus of $d$-VFCSR over $(\mathbb{F}_{p},P,\mathcal{B})$}
We write all elements or its lift in the vectorial basis $\mathcal{B}$. For all 
$$
\begin{array}{ll}
i\in \mathbb{N},\text{ }a_{i}=\sum\limits_{j=0}^{j=n-1}a_{j}^{i}\beta^{j};%
\text{ $a_{j}^{i}\in \mathbb{F}_{p}$ \ }\\
i\geq r-1,\text{\ }m_{i}=\sum\limits_{k=0}^{k=d-1}\sum\limits_{j=0}^{j=n-1}m_{k,j}^{i}\pi^{k}\beta^{j};\text{
$m_{k,j}^{i}\in \mathbb{Z}$ }\\ 
1\leq i\leq r,\text{\ }%
q_{i}=\sum\limits_{j=0}^{j=n-1}q_{j}^{i}\beta^{j};\text{ $q_{j}^{i}\in 
S$,}\\
i\geq r,\text{ }\sigma
_{i}=\sum\limits_{k=0}^{k=d-1}\sum\limits_{j=0}^{j=n-1}\sigma _{k,j}^{i}\pi^{k}\beta^{j};\text{ $\sigma
_{k,j}^{i}\in \mathbb{Z}$.}%
\end{array}
$$
In calculating $\sigma$, we find a polynomial expression of
degree $2n-2$ in $\beta$ thus we must eliminate the degree greater than $n-1$ to obtain
the coordinates for $\sigma $ in the basis $\mathcal{B}$. For this, we must
express the power of $\beta$ in terms of $\mathcal{B}$ with the polynomial 
$P$. So we set for all $j\geq n$, $\beta^{j}=\sum\limits_{t=0}^{t=n-1}b_{t}^{j}\beta^{t}$, where  $b_{t}^{j}\in \mathbb{Z}$. We get the coordinates of $\sigma$: for $z\geq r,$ 
$$
\begin{array}{rl}
\sigma_{z}=&\sum\limits_{t=0}^{t=n-1}\Big[\sum\limits_{k=0}^{k=t}\sum%
\limits_{i=1}^{i=r}(q_{k}^{i}a_{t-k}^{z-i})+\sum%
\limits_{j=n}^{j=2n-2}(b_{t}^{j}\sum\limits_{k=j-n+1}^{k=n-1}\sum\limits_{i=1}^{i=r}(q_{k}^{i}a_{j-k}^{z-i}))
+\sum\limits_{i=0}^{i=d-1}m_{i,t}^{z-1}\pi^{i}\Big]\beta^{t}\\
\sigma^{z}_{t}=&\underbrace{\underset{k=0}{\overset{k=t}{\sum}}\sum%
\limits_{i=1}^{i=r}(q_{k}^{i}a_{t-k}^{z-i})+\sum%
\limits_{j=n}^{j=2n-2}(b_{t}^{j}\sum\limits_{k=j-n+1}^{k=n-1}\sum\limits_{i=1}^{i=r}(q_{k}^{i}a_{j-k}^{z-i}))+m_{0,t}^{z-1}}_{\displaystyle\sigma^{z}_{0,t}}+\sum\limits_{i=1}^{i=d-1}\underbrace{m_{i,t}^{z-1}}_{\displaystyle\sigma^{z}_{i,t}}\pi^{i},
\end{array}
$$
the coordinates of the output sequence: 
\begin{equation}
\begin{array}{rl}
a_{z}=&\sum\limits_{t=0}^{t=n-1}\Big[\sum\limits_{k=0}^{k=t}\sum%
\limits_{i=1}^{i=r}(q_{k}^{i}a_{t-k}^{z-i})+\sum%
\limits_{j=n}^{j=2n-2}(b_{t}^{j}\sum\limits_{k=j-n+1}^{k=n-1}\sum\limits_{i=1}^{i=r}(q_{k}^{i}a_{j-k}^{z-i}))+m_{0,t}^{z-1}\Big]\, \mathtt{mod}\,\pi\,(\beta\,\mathtt{mod}\, \pi)^{t}\\
a_{t}^{z}=&\sigma^{z}_{0,t}\, \mathtt{mod} \, p,
\end{array}
\label{E4}
\end{equation}
and the coordinates of the memory values in the basis $\mathcal{B}$:
\begin{equation}
\begin{array}{rl}
m_{z}=\displaystyle\frac{\sigma_{z}-a_{z}}{\pi}=&\sum\limits_{i=0}^{i=d-2}\sum\limits_{t=0}^{t=n-1}\sigma_{i+1,t}^{z}\pi^{i}\beta^{t}+\sum\limits_{t=0}^{t=n-1}\displaystyle\frac{\sigma_{0,t}^{z}-\sigma_{0,t}^{z}\, \mathtt{mod}\,p}{2}\pi^{d-1}\beta^{t}\\
m_{z}=&\sum\limits_{i=0}^{i=d-2}\sum\limits_{t=0}^{t=n-1}m_{i+1,t}^{z-1}\pi^{i}\beta^{t}+\sum\limits_{t=0}^{t=n-1}\displaystyle\frac{\sigma_{0,t}^{z}-\sigma_{0,t}^{z}\, \mathtt{mod}\,p}{2}\pi^{d-1}\beta^{t}\\
m^{z}_{t}=&\sum\limits_{i=0}^{i=d-2}\underbrace{m_{i+1,t}^{z-1}}_{\displaystyle m_{i,t}^{z}}\pi^{i}+\underbrace{\frac{\sigma_{0,t}^{z}-\sigma_{0,t}^{z}\, \mathtt{mod}\,p}{p}}_{\displaystyle m_{d-1,t}^{z}}\pi^{d-1}.
\end{array}
\label{E5}
\end{equation}
\begin{equation}
\begin{array}{rl}a^{z}_{t}&=\sigma_{t}^{z}-\pi m_{t}^{z}=\sigma^{z}_{0,t}
+\sum\limits_{i=1}^{i=d-1}m_{i,t}^{z-1}\pi^{i}-
\sum\limits_{i=0}^{i=d-2}m_{i+1,t}^{z-1}\pi^{i+1}
- m_{d-1,t}^{z}p\\
&=\displaystyle\sigma^{z}_{0,t}- m_{d-1,t}^{z}p=\displaystyle\sigma^{z}_{0,t}- m_{0,t}^{z+d-1}p.
\end{array}
\label{E5-2}
\end{equation}
\section{Norm and Analysis of $d$-VFCSRs}
The vectorial output sequence $\underline{a}$ corresponds to $n$ binary sequences $\underline{a}_{j}=(a_{j}^{i})_{i\in \mathbb{N}}$. We consider $\underline{a}_{k,j}$ the $d$-decimations cyclic shift of $\underline{a}_{j}$ for each $0\leq j\leq n-1$:
\begin{equation}
\underline{a}=(a_{i})_{i\in \mathbb{N}}=\sum_{j=0}^{j=n-1}(a_{j}^{i})_{i\in 
\mathbb{N}}\overline{\beta}^{j}=\sum_{j=0}^{j=n-1}\underline{a}_{j}\overline{\beta}^{j},
\label{E6} 
\end{equation}
and we have for all $0\leq k\leq d-1$,
\begin{equation}
\underline{a}_{k,j}=(a_{j}^{di+k})_{i\in 
\mathbb{N}}.
\label{E7} 
\end{equation}
With this vectorial vision of elements, we construct a similar
correspondence to that of VFCSR. For each $\underline{a}_{j}$, we associate its $\pi$-adic development and we obtain a $\pi$-adic vector $\alpha=(\alpha_{j})_{0\leq j\leq n-1}$:
\begin{equation}
{(\mathbb{F}_{p^{n}})}^{\mathbb{N}} \rightarrow  {(\mathbb{Z}_{p}[\pi])}^{n},\, 
\underline{a}  \mapsto  \alpha=\Big(\displaystyle\sum\limits_{z\in \mathbb{N}}a_{j}^{z}\pi^{z}%
\Big)_{j},
\label{E8} 
\end{equation}
where $\mathbb{Z}_{p}[\pi]$ is the ring of the $\pi$-adic integers. It is a free $\mathbb{Z}_{p}$-module of rank $d$ with the canonical basis $\left\{1,\pi,\ldots,\pi^{d-1}\right\}$. For each $j$, set $\alpha_{j}=\sum_{k=0}^{d-1}(\sum_{z\in \mathbb{N}}a_{j}^{dz+k}p^{z})\pi^{k}$. So, for each $\underline{%
a}_{k,j}$, we associate its $p$-adic development and we obtain a $p$-adic vector $%
\alpha^{'} =(\alpha_{k,j})_{k,j}$ associated to $\underline{a}$ 
\begin{equation}
{(\mathbb{F}_{p^{n}})}^{\mathbb{N}}  \rightarrow  {(\mathbb{Z}_{p})}^{dn} ,\, 
\underline{a}  \mapsto  \alpha^{'}=\Big(\displaystyle\sum\limits_{z\in \mathbb{N}}a_{j}^{dz+k}p^{z}%
\Big)_{k,j}.
\label{E9}
\end{equation}
\begin{definition}
\label{D2} We define the connection integer of the $d$-VFCSR by $q=-1+\overset{i=r}{\underset{i=1}{\sum}}q_{i}\pi^{i}\in\mathbb{Z}[\pi,\beta]$. We set for each $0\leq j\leq n-1$, $\tilde{q}_{j}=\overset{i=r}{\underset{i=1}{\sum}}q_{j}^{i}\pi^{i}$. For each $0\leq k\leq d-1$, take $\tilde{q}_{k,j}=\overset{}{\underset{i;1\leq di+k\leq r}{\sum}} q_{j}^{di+k}p^{i}$. We have
\begin{equation}
q=-1+\displaystyle\sum\limits_{j=0}^{j=n-1}\tilde{q}_{j}\beta^{j}-1=
\displaystyle\sum\limits_{j=0}^{j=n-1}\displaystyle\sum\limits_{k=0}^{k=d-1}
\tilde{q}_{k,j}\pi^{k}\beta^{j}.
\label{E10} 
\end{equation}
\end{definition}
Using \eqref{E4}, \eqref{E5}, \eqref{E5-2}, \eqref{E7}, \eqref{E9} and \eqref{E10}, $\alpha^{'}$ verifies the following integral linear system 
\begin{equation}
\left\{\begin{array}{l}
 \alpha_{k,j}-\displaystyle\sum\limits_{l=0}^{l=j}\displaystyle\sum\limits_{u=0}^{u=k}\tilde{q}_{u,l}\alpha_{k-u,j-l}-\displaystyle\sum\limits_{l=0}^{l=j}\displaystyle\sum\limits_{u=k+1}^{u=d-1}\tilde{q}_{u,l}\alpha_{k-u+d,j-l}\,p\\
 -\displaystyle\sum\limits_{t=n}^{t=2n-2}b_{j}^{t}\displaystyle\sum\limits_{l=t+1-n}^{l=n-1}\displaystyle\sum\limits_{u=0}^{u=k}\tilde{q}_{u,l}\alpha_{k-u,t-l}\\
 -\displaystyle\sum\limits_{t=n}^{t=2n-2}b_{j}^{t}\displaystyle\sum\limits_{l=t+1-n}^{l=n-1}\displaystyle\sum\limits_{u=k+1}^{u=d-1}\tilde{q}_{u,l}\alpha_{k+d-u,t-l}\, p=\tilde{p}_{k,j}
\end{array}\right\}_{0\leq k\leq d-1,0\leq j\leq n-1}.
\label{E11}
\end{equation}%
This system can be written in matrix form with a square matrix $M^{'}\in \mathcal{M}_{nd}(\mathbb{Z})$. This matrix 
is called \textit{the connection integer matrix of the $d$-FCSR over }$(\mathbb{F}_{p},P,\mathcal{B})$. Furthermore, using \eqref{E4}, \eqref{E5}, \eqref{E5-2}, \eqref{E6}, \eqref{E8} and \eqref{E10}, $\alpha$ verifies the following integral linear system 
\begin{equation}
\left\{
\alpha_{j}-\displaystyle\sum\limits_{l=0}^{l=j} \tilde{q}_{l}\alpha_{j-l}-\displaystyle\sum\limits_{t=n}^{t=2n-2}b_{j}^{t}\displaystyle\sum\limits_{l=t+1-n}^{l=n-1} \tilde{q}_{l}\alpha_{t-l}=\tilde{p}_{j}
\right\}_{0\leq j\leq n-1}.
\label{E12}
\end{equation}%
This system can be written in matrix form with a square matrix $M\in \mathcal{M}_{n}(\mathbb{Z}[\pi])$. We have the following theorem:
\begin{theorem}\label{thm1} Consider a $d$-VFCSR over $(\mathbb{F}_{p},P,\mathcal{B})$ of length $r$ with connection coefficient $q_{1},\ldots,q_{r}$. Let $M^{'}$ the square matrix defined by \eqref{E11} and $M$ the square matrix defined by \eqref{E12}. For all output sequence $\underline{a}$, consider $\alpha$ the $\pi$-adic vector defined by \eqref{E8} and $\alpha^{'}$ the $p$-adic vector defined by \eqref{E9}. We have the following assertions: \\
\linebreak
1. $M.\alpha=y$ where $y=(\tilde{p}_{j})_{j}\in (\mathbb{Z}[\pi])^{n}$ is defined by \eqref{E12}.\\
2. $M^{'}.\alpha^{'}=y^{'}$ where $y^{'}=(\tilde{p}_{k,j})_{k,j}\in (\mathbb{Z})^{nd}$ is defined by \eqref{E12}.\\
3. The matrix $M^{'}$ is the matrix in the canonical basis $\mathcal{B}$ of the $\mathbb{Q}$-linear transformation defined as the multiplication by $-q$.  The matrix $M$ is the matrix in the canonical basis $\left\{ 1,\beta,\cdots,\beta^{n-1}\right\}$ of the $\mathbb{Q}(\pi)$-linear transformation defined as the multiplication by $-q$. \\
4. We have $\det(M)=\mathtt{N}^{\mathbb{Q}(\pi,\beta)}_{\mathbb{Q}(\pi)}(-q)$, $\det(M^{'})=\mathtt{N}^{\mathbb{Q}(\pi,\beta)}_{\mathbb{Q}}(-q)$ and $\det(M^{'})=\mathtt{N}^{\mathbb{Q}(\pi)}_{\mathbb{Q}}(\det(M))$.\\
5. The coefficients of the diagonal of $M$ are congruent to 1 modulo $\pi$ and the other coefficients are multiples of $\pi$.\\
6. $\det(M)$ is congruent to 1 modulo $\pi$ and $M$ is invertible in $\mathcal{M}_{n}(\mathbb{Z}[\pi])$.\\
7. $\det(M^{'})$ is congruent to 1 modulo $p$ and $M^{'}$ is invertible $\mathcal{M}_{nd}(\mathbb{Z})$.\\
8. The $\pi$-adic vector $\alpha$ is in $\displaystyle\frac{1}{\det (M)}\mathbb{Z}[\pi]^{n}$ and the $p$-adic vector $\alpha^{'}$ is in $\displaystyle\frac{1}{\det (M^{'})}\mathbb{Z}^{nd}$.
\end{theorem}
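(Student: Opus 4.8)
The plan is to treat the eight assertions in the order given, since each later one leans on the earlier ones, and to isolate the only genuinely laborious part at the very start. Assertions~1 and~2 are a careful re-reading of the register dynamics: substituting the update rules \eqref{E4}, \eqref{E5}, \eqref{E5-2} for the output digits $a_t^z$ and the memory digits $m_{i,t}^z$ into the $\pi$-adic generating series $\alpha_j$ of \eqref{E8} (resp. the $p$-adic series $\alpha_{k,j}$ of \eqref{E9}) and then collecting the coefficient of each monomial $\pi^k\beta^j$ reproduces verbatim the integral systems \eqref{E12} and \eqref{E11}, so that $M\alpha=y$ and $M'\alpha'=y'$ hold by construction. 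I expect this bookkeeping together with the identification in assertion~3 to be the main obstacle of the whole theorem; everything afterwards is structural. The delicate points are tracking the $d$-decimation shifts of \eqref{E7}, the reduction of $\beta^t$ for $t\ge n$ recorded by the coefficients $b_j^t$ of \eqref{E11}, and the relation $\pi^d=p$ that moves carries between $\pi$-levels.

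Assertion~3 is the conceptual pivot, and I would prove it by directly identifying the left-hand side of \eqref{E11} with the coordinates, in the basis $\mathcal B$, of the product $(-q)\cdot A$ where $A=\sum_{k,j}\alpha_{k,j}\pi^k\beta^j$. Expanding $-q=1-\sum_{k,j}\tilde q_{k,j}\pi^k\beta^j$ from \eqref{E10} against $A$, the terms with $u\le k$ contribute $\tilde q_{u,l}\alpha_{k-u,\cdot}$, while the terms with $u>k$ force a carry $\pi^{\,k-u+d}=p\,\pi^{\,k-u}$, which is exactly the origin of the factor $p$ multiplying those sums in \eqref{E11}; the $b_j^t$ handle the reduction of $\beta^t$ with $t\ge n$. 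Hence $M'$ is, by definition, the matrix of multiplication by $-q$ in $\mathcal B$ over $\mathbb Q$, and the same computation performed over $\mathbb Q(\pi)$ (keeping $\pi$ as a scalar, so the $\tilde q_l\in\mathbb Z[\pi]$ stay as entries) identifies $M$ with multiplication by $-q$ in $\{1,\beta,\dots,\beta^{n-1}\}$.

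Assertion~4 is then immediate from the definition of the norm as the determinant of the multiplication map: $\det M=\mathtt N^{\mathbb Q(\pi,\beta)}_{\mathbb Q(\pi)}(-q)$ and $\det M'=\mathtt N^{\mathbb Q(\pi,\beta)}_{\mathbb Q}(-q)$, while the third equality is the transitivity of the norm along the tower $\mathbb Q\subseteq\mathbb Q(\pi)\subseteq\mathbb Q(\pi,\beta)$. For assertions~5--7 I would exploit the single fact that $q\equiv-1\pmod\pi$, that is $-q=1+\pi\delta$ with $\delta\in\mathbb Z[\pi,\beta]$. Reducing the matrix of assertion~3 modulo $\pi$ turns multiplication by $-q$ into the identity, so $M\equiv I_n\pmod\pi$, which is assertion~5; taking determinants gives $\det M\equiv1\pmod\pi$, whence its $\pi$-adic valuation is $0$ and it is a unit of the $\pi$-adic ring $\mathbb Z_p[\pi]$, so $M$ is invertible, proving assertion~6. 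For assertion~7, multiplication by $\pi\delta$ satisfies $(\pi\delta)^d=p\,\delta^d\equiv0\pmod p$, so $M'=I_{nd}+N$ with $N$ nilpotent modulo $p$; hence $\det M'\equiv1\pmod p$ (equivalently, reducing the matrix \eqref{E} modulo $p$ gives $\mathtt N^{\mathbb Q(\pi)}_{\mathbb Q}(x)\equiv x_0^{\,d}\bmod p$, applied to $x=\det M$), and $\det M'$ is a unit of $\mathbb Z_p$, giving invertibility.

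Finally, assertion~8 follows from Cramer's rule. From $M\alpha=y$ with $M$ and $y$ over $\mathbb Z[\pi]$ and $\det M\neq0$ one gets $\alpha=(\det M)^{-1}\,\mathrm{adj}(M)\,y$, and since the adjugate $\mathrm{adj}(M)$ has entries in $\mathbb Z[\pi]$ and $y\in\mathbb Z[\pi]^n$, this places $\alpha\in\frac1{\det M}\mathbb Z[\pi]^n$; the identical argument over $\mathbb Z$, using $M'\alpha'=y'$ with $M',y'$ integral and $\det M'\neq0$, yields $\alpha'\in\frac1{\det M'}\mathbb Z^{nd}$. The only step requiring genuine care beyond routine algebra is the derivation and matching of the systems \eqref{E11}, \eqref{E12} in assertions~1--3; assertions~4--8 are then formal consequences of the norm--determinant dictionary and the congruence $-q\equiv1\pmod\pi$.
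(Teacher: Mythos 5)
Your proposal is correct and follows essentially the same route as the paper: assertions 1--3 by matching the register recursions with the coordinates of multiplication by $-q$ in the bases $\mathcal{B}$ and $\left\{1,\beta,\ldots,\beta^{n-1}\right\}$, assertion 4 from the norm--determinant definition and transitivity, 5--6 from $-q\equiv 1 \pmod{\pi}$, and 8 by the adjugate/Cramer argument. The only variation is your nilpotency argument ($M'=I+N$ with $N^{d}\equiv 0 \bmod p$) for assertion 7, but since you also give the paper's own argument (reducing the norm-form matrix \eqref{E} modulo $p$ so that it becomes triangular with diagonal $x_{0}\equiv 1$), this is a cosmetic rather than a structural difference.
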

\begin{proof}
- The points (1) and (2) are direct consequences of the calculation of $\alpha$ and $\alpha^{'}$ using recursive relations (\eqref{E4}, \eqref{E5}, \eqref{E5-2}, \eqref{E7}, \eqref{E9} and \eqref{E10}) between the connection coefficients, the initial state and the initial memory. To find the matrix $M$, we use the same calculations considering all elements as vectors on $\mathbb{Q}(\pi)$.  \\
- For the point (3), if we read the linear transformation defined by $-q$ as a vector over $\mathbb{Z}$ and over $\mathbb{Z}[\pi]$, we find respectively the matrix $M^{'}$ and $M$. \\
- The point (4) is direct consequence of the definition of the norm and its basic properties.\\
- For the points (5), $M$ is equal to the identity matrix minus a matrix whose coefficients are linear combinations of the collection $(\tilde{q}_{0},\cdots,\tilde{q}_{n-1})$ and the $\tilde{q}_{j}$ are multiples of $\pi$. The point $(6)$ is a direct consequence of $(5)$. \\
- The norm $\mathtt{N}_{\mathbb{Q}}^{\mathbb{Q}(\pi)}(\det(M))$ is the determinant of the matrix \eqref{E} with $x_{i}$ is the $i$-th component of $\det(M)$ with respect to the basis $\left\{1,\ldots,\pi^{d-1}\right\}$. By (6), $x_{0}$ is congruent to 1 modulo $p$. So the determinant of this matrix is congruent to 1 modulo $p$ and $M^{'}$ is invertible. The point $(7)$ is showed.\\
- For the point (8), $M^{'}$ and $M$ have a comatrix respectively in $\mathbb{Z}$ and $\mathbb{Z}[\pi]$, and we have:
$$
\begin{array}{l}
\alpha^{'} =\displaystyle\frac{1}{|\det (M^{'})|}\mathtt{sgn}(\det M^{'})\mathtt{Comat}(M^{'}).y^{'}\textrm{ and }\\
\alpha= \displaystyle\frac{1}{|\det (M)|}\mathtt{sgn}(\det M)\mathtt{Comat}(M).y. 
\end{array}
$$
\end{proof}
For all $q\in \mathbb{Z}[\pi,\beta]$ such that $q\cong -1 \,\mathtt{mod}\, \pi$, we denote $\mathtt{N}^{\mathbb{Q}(\pi,\beta)}_{\mathbb{Q}(\pi)}(-q)$ by $N$ and $\mathtt{N}^{\mathbb{Q}(\pi,\beta)}_{\mathbb{Q}}(-q)$ by $N^{'}$. $N$ is an element of $\mathbb{Z}[\pi]$ represented by an $n$-form defined by $M$ with arguments $(\displaystyle\tilde{q}_{0}-1,\tilde{q}_{1},\ldots,\tilde{q}_{n-1})$. As far as, $N^{'}$ is an integer represented by an $nd$-form defined by $M^{'}$ with arguments $(\tilde{q}_{0,0}-1,\tilde{q}_{1,0},\ldots,\tilde{q}_{d-1,n-1})$.
\begin{figure}[]
\begin{center}
\includegraphics[height=9cm,width=11cm]{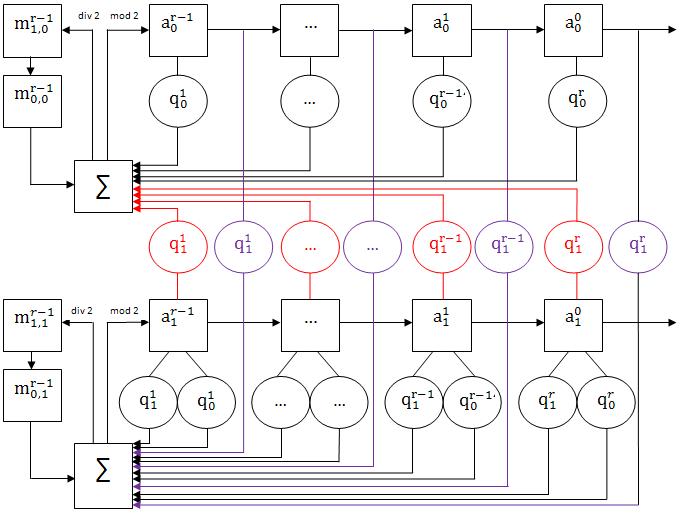}
\caption{Representation of $d$-VFCSR with $n=2$ and $d=2$.}
\label{d-VFCSR-Q}
\end{center}
\end{figure}
\section{Periodicity}
In this section, we discuss the periodicity of output sequences of $d$-VFCSRs.
\begin{theorem}\label{thm2}
Consider a $d$-VFCSR over $(\mathbb{F}_{p},P,\mathcal{B})$. Let an output vectorial sequence $\underline{a}$.\\
1. For all $0\leq j\leq n-1$ and for all $0\leq k\leq d-1$, $\underline{a}_{k,j}$ is eventually periodic and its period divides $\mathtt{ord}_{N^{'}}(p)$, where $\mathtt{ord}_{N^{'}}(p)$ is the order of $p$ modulo ${N}^{'}$.\\
2. For all $0\leq j\leq n-1$, $\underline{a}_{j}$ is eventually periodic and its period divides $d.\mathtt{ord}_{N^{'}}(p)$.\\
3. $\underline{a}$ is eventually periodic and its period is the lcm of the periods of the collection $\underline{a}_{j}$ and divides $d.\ord_{N^{'}}(p)$.
\end{theorem}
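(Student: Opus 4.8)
The plan is to deduce all three assertions from Theorem~\ref{thm1}, specifically its points (7) and (8), together with the classical description of rational $p$-adic numbers as those with eventually periodic expansion. Throughout set $N'=\det(M')$, which equals $\mathtt{N}^{\mathbb{Q}(\pi,\beta)}_{\mathbb{Q}}(-q)$ by Theorem~\ref{thm1}(4), and put $T=\mathtt{ord}_{N'}(p)$.

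For part (1) I would argue as follows. By Theorem~\ref{thm1}(8) every coordinate of $\alpha'$ lies in $\frac{1}{N'}\mathbb{Z}$, so $\alpha_{k,j}=t_{k,j}/N'$ for some $t_{k,j}\in\mathbb{Z}$. By Theorem~\ref{thm1}(7) we have $N'\equiv 1 \pmod p$, hence $\gcd(N',p)=1$ and $p$ is invertible modulo $N'$. I then invoke the standard lemma that a $p$-adic integer equal to a rational $s/q$ with $\gcd(q,p)=1$ has an eventually periodic digit sequence whose period equals $\mathtt{ord}_{|q|}(p)$: writing $p^{T}-1=N'm$ gives $t_{k,j}/N'=t_{k,j}m/(p^{T}-1)=-t_{k,j}m\sum_{\ell\ge 0}p^{\ell T}$, from which the digits repeat with period dividing $T$ after a finite transient. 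Since by \eqref{E9} the $p$-adic digits of $\alpha_{k,j}$ are exactly the terms $a_j^{dz+k}$ of $\underline{a}_{k,j}$, this gives that $\underline{a}_{k,j}$ is eventually periodic with period dividing $T=\mathtt{ord}_{N'}(p)$.

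For part (2), by \eqref{E7} the sequence $\underline{a}_j$ is the length-$d$ interleaving of $\underline{a}_{0,j},\ldots,\underline{a}_{d-1,j}$, since $a_j^{dz+k}$ is the $z$-th term of $\underline{a}_{k,j}$. For $z$ large each $\underline{a}_{k,j}$ satisfies $a_j^{d(z+T)+k}=a_j^{dz+k}$, hence $a_j^{\,i+dT}=a_j^{\,i}$ for all large $i$, so $\underline{a}_j$ is eventually periodic with period dividing $dT=d\,\mathtt{ord}_{N'}(p)$. For part (3), by \eqref{E6} we have $\underline{a}=\sum_{j=0}^{n-1}\underline{a}_j\,\overline{\beta}^{\,j}$, and $\{1,\overline{\beta},\ldots,\overline{\beta}^{\,n-1}\}$ is an $\mathbb{F}_p$-basis of $\mathbb{F}_{p^n}$; since the coordinates are independent, $\underline{a}$ is eventually periodic iff every $\underline{a}_j$ is, and its eventual period is the least common multiple of the eventual periods of the $\underline{a}_j$, which divides $d\,\mathtt{ord}_{N'}(p)$ by part (2).

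The only genuinely substantive step is part (1): establishing the classical $p$-adic periodicity lemma and, above all, tracking the reduction of the unreduced fraction $t_{k,j}/N'$ to lowest terms. Writing the reduced denominator as $q'=N'/\gcd(t_{k,j},N')$, which is still coprime to $p$, the period equals $\mathtt{ord}_{|q'|}(p)$; the point needing care is that $q'\mid N'$ forces $\mathtt{ord}_{q'}(p)\mid \mathtt{ord}_{N'}(p)$, so that common factors between numerator and denominator can only shrink the period. Once this is in place, the interleaving bookkeeping of (2) and the basis/lcm argument of (3) are routine.
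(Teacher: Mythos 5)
Your proposal is correct and takes essentially the same route as the paper: rationality of each $\alpha_{k,j}$ with denominator $N'$ (from Theorem~\ref{thm1}) combined with the classical eventual-periodicity criterion for rational $p$-adic integers gives (1), the interleaving index computation $a_j^{dI+k+d\,\mathtt{ord}_{N'}(p)}=a_j^{dI+k}$ gives (2), and the coordinate-wise lcm argument gives (3). The only difference is that you spell out steps the paper compresses or outsources --- the geometric-series proof of the $p$-adic lemma, the divisibility $\mathtt{ord}_{q'}(p)\mid\mathtt{ord}_{N'}(p)$ for $q'\mid N'$ (which the paper uses tacitly), and part (3), for which the paper simply cites an earlier proposition.
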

\begin{proof}
- The second point of Theorem \eqref{thm1} shows that $\alpha_{k,j}=\frac{y_{k,j}}{N^{'}}$. A $p$-adic integer is a rational if and only if its subsequence is eventually periodic, then the sequence $\underline{a}_{k,j}$ is eventually periodic and its period is equal to the order of $p$ modulo $\frac{N^{'}}{\gcd(N^{'},y^{'}_{k,j})}$. So the period divides the order of $p$ modulo the denominator.\\
- For all $j$, $\underline{a}_{j}=(a_{j}^{0},a_{j}^{1},\ldots,a_{j}^{i},\ldots)$. For all $i$, it exists a pair $(I,k)$ such that $i=dI+k$ with $0\leq k\leq d-1$. We have $a_{j}^{i+d\ord_{N^{'}}(p)}=a_{j}^{dI+k+d\ord_{N^{'}}(p)}=a_{j}^{d(I+\ord_{N^{'}}(p))+k}=a_{j}^{dI+k}=a_{j}^{i}$. So the period of $\underline{a}_{j}$ divides $d.\ord_{N^{'}}(p)$.\\
- The same proof as in [\cite{marjane2010},page 245,Prop 2].
\end{proof}
\begin{corollary}\label{cor1}If $N^{'}$ is a prime number, $p$ is a primitive root modulo $N^{'}$, $d$ is relatively with $N^{'}-1$, then the period of $\underline{a}$ is equal to $N^{'}-1$.
\end{corollary}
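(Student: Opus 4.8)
My plan is to pin down the period exactly by passing to the residue ring $\mathbb{Z}[\pi,\beta]/(q)$ and computing there the multiplicative orders of $\pi$ and of $p$. Throughout I treat the maximal case, i.e. I assume $\underline{a}$ is non-degenerate so that its reduced denominator is all of $(q)$ (equivalently, $N'$ does not divide the numerators $y'_{k,j}$ appearing in Theorem \ref{thm1}); degenerate sequences can only give proper divisors of $N'-1$, so the equality is asserted for the maximal sequences. First I would record the elementary order computation: since $N'$ is prime, $\mathbb{F}_{N'}^{\times}$ is cyclic of order $N'-1$, and $p$ being a primitive root modulo $N'$ means exactly $\ord_{N'}(p)=N'-1$. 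Feeding this into Theorem \ref{thm2} immediately yields the upper bounds: the period of each decimation $\underline{a}_{k,j}$ divides $N'-1$, and the period of $\underline{a}$ divides $d(N'-1)$.

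The key step is to identify the quotient. Because $N'=|\N^{\mathbb{Q}(\pi,\beta)}_{\mathbb{Q}}(q)|$ is a rational prime, the ideal $(q)$ of $\mathbb{Z}[\pi,\beta]$ is prime with residue field of cardinality $N'$, so $\mathbb{Z}[\pi,\beta]/(q)\cong\mathbb{F}_{N'}$ and its unit group is cyclic of order $N'-1$. Write $\bar{\pi}$ for the image of $\pi$ under the reduction map. Since $\mathbb{F}_{N'}$ has characteristic $N'$ (so that $N'\in(q)$ and $\mathbb{Z}\cap(q)=N'\mathbb{Z}$), the image of the rational integer $p$ is $p\bmod N'$, which by hypothesis generates $\mathbb{F}_{N'}^{\times}$, and the relation $p=\pi^{d}$ forces $p\bmod N'=\bar{\pi}^{\,d}$. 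Setting $\omega=\ord(\bar{\pi})$, we have $\omega\mid N'-1$; the hypothesis $\gcd(d,N'-1)=1$ then gives $\gcd(\omega,d)=1$, whence $\ord(\bar{\pi}^{\,d})=\omega$. But $\bar{\pi}^{\,d}=p\bmod N'$ has order $\ord_{N'}(p)=N'-1$, so $\omega=N'-1$, i.e. $\pi$ is itself a generator of $\mathbb{F}_{N'}^{\times}$.

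Finally I would translate the order $\omega$ back into a period. By the $\pi$-adic analysis of Section 4 (the $d$-VFCSR being the AFSR attached to $\pi=\sqrt[d]{p}$ over $R=\mathbb{Z}[\pi,\beta]$, with connection element $q$), the eventual period of the $\pi$-adic expansion $\alpha$ of a non-degenerate $\underline{a}$ equals $\ord_{q}(\pi)=\omega=N'-1$. Combined with the bound ``divides $d(N'-1)$'' from Theorem \ref{thm2}, the coprimality $\gcd(d,N'-1)=1$ shows the spurious factor $d$ cannot occur, and hence the period of $\underline{a}$ is exactly $N'-1$.

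I expect the main obstacle to be this last translation: rigorously matching the abstract multiplicative order $\ord_{q}(\pi)$ with the genuine period of the interleaved vectorial output $\underline{a}=\sum_{j}\underline{a}_{j}\,\overline{\beta}^{\,j}$, and in particular verifying exactness (that the period is not a proper divisor) for a non-degenerate sequence. Equivalently, one must reconcile the $p$-adic count of Theorem \ref{thm2}, which a priori allows a factor of $d$ coming from the $d$ interleaved decimations of each $\underline{a}_{j}$, with the $\pi$-adic count $N'-1$. The reconciling mechanism is the cyclic-shift relation (multiplication by $\pi$ permutes the $d$ decimations of each $\underline{a}_{j}$, up to the wrap-around factor $p$) together with $\gcd(d,N'-1)=1$, which is precisely what forces $\gcd(\omega,d)=1$ and removes the extra factor.
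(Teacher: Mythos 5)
Your algebraic core is correct and is a genuinely different route from the paper's. You pass to the residue ring $\mathbb{Z}[\pi,\beta]/(q)$, observe that it has $|\N^{\mathbb{Q}(\pi,\beta)}_{\mathbb{Q}}(q)|=N^{'}$ elements and hence is $\mathbb{F}_{N^{'}}$ (so $(q)$ is maximal), and deduce $\ord_{q}(\pi)=N^{'}-1$ from $\overline{\pi}^{d}=\overline{p}$ together with $\gcd(d,N^{'}-1)=1$; all of this is sound. The paper never looks at $\mathbb{Z}[\pi,\beta]/(q)$ at all: its proof of Corollary \ref{cor1} stays on the $p$-adic side, takes from Theorem \ref{thm2} that each nontrivial decimation $\underline{a}_{k,j}$ has period exactly $N^{'}-1$, and then tries to transfer this to the interleaved sequence $\underline{a}_{j}$ by an index manipulation based on the bijectivity of $x\mapsto dx\ \mathtt{mod}\ (N^{'}-1)$.

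The genuine gap is your final translation step, and your fallback for it is a non sequitur. The identity ``eventual period of $\underline{a}$ equals $\ord_{q}(\pi)$'' is proved nowhere in the paper --- Section 4 contains only the $p$-adic statements of Theorem \ref{thm1}, and Theorem \ref{thm2} only gives divisibility by $d\,\ord_{N^{'}}(p)$ --- and it is not a formality: for instance $-1\in\mathbb{Z}[\pi]$ has $\pi$-adic expansion $(p-1)+(p-1)\pi^{d}+(p-1)\pi^{2d}+\cdots$, of period $d$, so ``period $=$ order of $\pi$ modulo the denominator'' is precisely the kind of statement that the wrap-around factor $\pi^{d}=p$ can break; making it rigorous is the hard, ring-specific part of the Goresky--Klapper and Klapper--Xu period theorems, which hold under extra hypotheses you do not verify. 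Your backup reasoning does not close this: from $N^{'}-1\mid T$ (exactness, which does follow from maximality of $(q)$ and non-degeneracy) and $T\mid d(N^{'}-1)$, the admissible periods are exactly $T=e(N^{'}-1)$ with $e\mid d$, and $\gcd(d,N^{'}-1)=1$ does nothing to exclude $e>1$; so ``the spurious factor $d$ cannot occur'' does not follow from the two bounds. What is missing is an argument that $e=1$, and your own computation contains its seed: $\ord_{q}(\pi)=N^{'}-1$ gives $\pi^{N^{'}-1}=1+cq$, hence $\pi^{N^{'}-1}A-A=-cY\in\mathbb{Z}[\pi,\beta]$ for the $\pi$-adic element $A=Y/(-q)$ attached to $\underline{a}$; reading this relation coordinate by coordinate in the basis $\mathcal{B}$ turns it into congruences $y^{'}_{k,j}\equiv p^{s}y^{'}_{k^{'},j}\ (\mathtt{mod}\ N^{'})$ between the numerators of the $p$-adic decimations, and since a strictly periodic $p$-adic rational with denominator $N^{'}$ is determined by its numerator modulo $N^{'}$, these congruences say the periodic tails are invariant under the shift by $N^{'}-1$. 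That step (or an equivalent) must be written out; note that the paper's own chain $a_{j_{0}}^{i+k+N^{'}-1}=a_{j_{0}}^{d(d^{-1}i)+k+N^{'}-1}=\cdots$ suffers from a similar difficulty, since decimation periodicity only licenses shifts by multiples of $d(N^{'}-1)$ in the original indexing, but that does not repair your argument.
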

\begin{proof}
If $N^{'}$ is prime and $p$ is a primitive root modulo $N^{'}$, then $\ord_{N^{'}}(p)=N^{'}-1$. By the theorem \eqref{thm2}, the period of $\underline{a}_{k,j}$ is $1$ or $N^{'}-1$ for all $k$ and $j$. If for all $k$ and $j$, the period of $\underline{a}_{k,j}$ is $1$, then the period of $\underline{a}_{j}$ is $1$ and the period of $\underline{a}$ is $1$. Otherwise there exist $k$ and $j_{0}$ such that the period $\underline{a}_{k,j_{0}}$ is $N^{'}-1$. If $d$ is relatively with $N^{'}-1$, the application $$\left\{0,d,\ldots,d(N^{'}-2)\right\}\rightarrow \left\{0,1,\ldots,N^{'}-2\right\},\quad dx\mapsto (dx) \mathtt{mod}(N^{'}-1)$$ is injective. In fact $(dx) \mathtt{mod}(N^{'}-1)=(dy) \mathtt{mod}(N^{'}-1)\Leftrightarrow N^{'}-1 \mid d(x-y)\Rightarrow N^{'}-1 \mid (x-y)$. Since $|x-y|<N^{'}-1$, then $x-y=0$. The both set have the same cardinal, then the application is injective. \\
So for $j_{0}$, $a_{j_{0}}^{i+k+N^{'}-1}=a_{j_{0}}^{d(d^{-1}i)+k+N^{'}-1}=a_{j_{0}}^{d(d^{-1}i)+k}=a_{j_{0}}^{i+k}$. Then the period of $\underline{a}_{j_{0}}$ divides the period of $\underline{a}_{k,j_{0}}$. By the same argument, the period of $\underline{a}_{k,j_{0}}$ divides the period of $\underline{a}_{j_{0}}$. So we have an equality and the period of $\underline{a}$ is $\mathtt{lcm}(1,N^{'}-1)=N^{'}-1$.
\end{proof}
\begin{rmq}\ \\
- If we have only $N^{'}$ is a prime number and $p$ is a primitive root modulo $N^{'}$, then the period of $\underline{a}_{k,j}$ is $N^{'}-1$ or 1. If there exists $k$ and $j_{0}$ such that the period is $N^{'}-1$, i.e. $\underline{a}_{j}$ is not trivial, then $N^{'}-1$ divides the period of $\underline{a}_{j}$. By the theorem \eqref{thm2}, the period of $\underline{a}$ is a multiple of $N^{'}-1$.\\
- If we have only $N^{'}$ is a prime number, then the period of $\underline{a}_{k,j}$ is $\ord_{N^{'}}(p)$ or 1. If there exists $k$ and $j_{0}$ such that the period is $\ord_{N^{'}}(p)$, i.e. $\underline{a}_{j}$ is not trivial, then $\ord_{N^{'}}(p)$ divides the period of $\underline{a}_{j}$. By the theorem \eqref{thm2}, the period of $\underline{a}$ is a multiple of $\ord_{N^{'}}(p)$.
\end{rmq}
\section{First case: $p=2$, $n=2$ and $d=2$}
We place ourselves in the binary case $p=2$. In this part, we will present the first case $n=2$ and $d=2$. The case $n=2$ is special because to build our
vectorial $d$-FCSR, there is a single irreducible polynomial of degree 2 on $\mathbb{F}_{2}$: $X^{2}-X-1$ modulo 2. We take $\pi=2^{\frac{1}{2}}$ and $\beta$ such that $\beta^{2}=\beta+1$. We analyse the quadratic case of VFCSRs. From
Equations \eqref{E4} and \eqref{E5}, operations are defined as follows:
\begin{enumerate}
\item Form integers $\sigma _{0,0}^{z}$ and $\sigma_{0}^{z}$, as follows $\sigma_{0,0}^{z}=\underset{i=1}{\overset{i=r}{\sum}}\left(
q_{1}^{i}a_{1}^{z-i}+q_{0}^{i}a_{0}^{z-i}\right)+m_{0,0}^{z-1}$, $\sigma_{1,0}^{z}= m_{1,0}^{z-1}$, $\sigma_{0,1}^{z}=\underset{i=1}{\overset{i=r}{\sum}}\left(
q_{1}^{i}a_{1}^{z-i}+q_{1}^{i}a_{0}^{z-i}+q_{0}^{i}a_{1}^{z-i}\right)
+m_{0,1}^{z-1}$ and $\sigma_{1,1}^{z}= m_{1,1}^{z-1}$.

\item Shift the content of the main register register on step to the right and shift the content of the carry register to the bottom, while outputing the rightmost bits $%
a_{0}^{z-i}$ and $a_{1}^{z-i}$ and the lowermost carry $m_{0,1}^{z-1}$ and $m_{0,0}^{z-1}$ as shown in Fig.\eqref{d-VFCSR-Q},

\item Put and replace $a_{0}^{z}=\sigma _{0,0}^{z}\,(\mathtt{mod}\, 2)$, $a_{1}^{z}=\sigma_{0,1}^{z}\,(\mathtt{mod}\, 2)$, $m_{1,0}^{z}=\sigma _{0,0}^{z}\,(\mathtt{div}\, 2)=\frac{\sigma
_{0,0}^{z}-a_{0}^{z}}{2}$ and $m_{1,1}^{z}=\sigma_{0,1}^{z}\,(\mathtt{div}\, 2)=\frac{\sigma _{0,1}^{z}-a_{1}^{z}}{2}$.
\end{enumerate}
The matrix $M$ is of the form 
$\left(\begin{array}{cc}
1-\tilde{q}_{0}&-\tilde{q}_{1}\\
-\tilde{q}_{1}&1-\tilde{q}_{0}-\tilde{q}_{1}\\
\end{array}\right)$ 
and the matrix $M^{'}$ is of the form:
\begin{equation}
\left(\begin{array}{cc|cc}
1-\tilde{q}_{0,0}&-2\tilde{q}_{1,0}&-\tilde{q}_{0,1}&-2\tilde{q}_{1,1}\\
-\tilde{q}_{1,0}&1-\tilde{q}_{0,0}&-\tilde{q}_{1,1}&-\tilde{q}_{0,1}\\
\hline
-\tilde{q}_{0,1}&-2\tilde{q}_{1,1}&1-\tilde{q}_{0,0}-\tilde{q}_{0,1}&-2\tilde{q}_{1,0}-2\tilde{q}_{1,1}\\
-\tilde{q}_{1,1}&-\tilde{q}_{0,1}&-\tilde{q}_{1,0}-\tilde{q}_{1,1}&1-\tilde{q}_{0,0}-\tilde{q}_{0,1}\\
\end{array}\right).
\end{equation} 
From the definition of $\tilde{q}_{k,j}$, we have 
\begin{equation}
\begin{array}{l}
x=\tilde{q}_{0,0}=\sum\limits_{1\leq 2i\leq r}q_{0}^{2i}2^{i}=\sum\limits_{1\leq i\leq [r/2]}q_{0}^{2i}2^{i},\\
y=\tilde{q}_{1,0}=\sum\limits_{1\leq 2i+1\leq r}q_{0}^{2i+1}2^{i}=\sum\limits_{0\leq i\leq [(r-1)/2]}q_{0}^{2i+1}2^{i},\\
z=\tilde{q}_{0,1}=\sum\limits_{1\leq 2i\leq r}q_{1}^{2i}2^{i}=\sum\limits_{1\leq i\leq [r/2]}q_{1}^{2i}2^{i}\quad and\\
t=\tilde{q}_{1,1}=\sum\limits_{1\leq 2i+1\leq r}q_{1}^{2i+1}2^{i}=\sum\limits_{0\leq i\leq [(r-1)/2]}q_{1}^{2i+1}2^{i}.
\end{array}
\end{equation}
The coefficients $\tilde{q}_{0,0}$ and $\tilde{q}_{0,1}$ are even and $\tilde{q}_{1,0}$ and $\tilde{q}_{1,1}$ can be odd. We must research prime numbers represented by $N^{'}$ with these arguments respecting these conditions. With a simple program of matlab, we generate the numbers represented by this $4$-form (see Table \eqref{quadruplet1}).\\
\textbf{Example 1}: If we take $N^{'}=151$ with the quadruplet $(0,1,2,2)$, then we find the following connection integer: $q=q_{1}\pi+q_{2}\pi^{2}+q_{3}\pi^{3}-1=\pi+\beta\pi^{2}+\beta\pi^{3}-1=\pi+2\beta+2\pi\beta-1.$. The quadruplet $(0,1,2,2)$ corresponds to the $d$-VFCSR of size 3 represented in the figure \eqref{d-VFCSR-151} and generates sequences with the period is a multiple of $\ord_{151}(2)=15$. For example, the initial state $(a_{0},a_{1},a_{2},m)=(1,1+\beta,\beta,5-\beta+4\pi\beta)$ generates the eventually periodic sequences in Table \eqref{sequence150} with the period is equal to 15.\\
\textbf{Example 2}: We recover also the example of Klapper and Xu \cite{klapper1999} with $q=\pi+(1+\beta)\pi^{3}-1=3\pi+2\pi\beta-1$ represented here by $N^{'}=401$ and the quadruplet $(0,3,0,2)$. The quadruplet $(0,3,0,2)$ corresponds to the $d$-VFCSR of size 3 represented in the figure \eqref{d-VFCSR-401} and generates sequences with the period is a multiple of $\ord_{401}(2)=200$.\\
\textbf{Example 3}: If we take $N^{'}=409$ with the quadruplet $(0,1,4,3)$, then we find $q=q_{1}\pi+q_{2}\pi^{2}+q_{3}\pi^{3}+q_{4}\pi^{4}-1=(1+\beta)\pi+\beta\pi^{3}+\beta\pi^{4}-1=\pi+7\beta\pi-1.$. This quadruplet corresponds to the $d$-VFCSR of size $4$ in the figure \eqref{d-VFCSR-409} and generates sequences with the period is $\ord_{409}(2)=204$ or $408$. For example, the initial state $(a_{0},a_{1},a_{2},a_{3},m)=(1,1+\beta,1+\beta,\beta,5-\beta+4\pi\beta)$ generates an eventually periodic sequences with the period is 408.
\begin{table}[]
\begin{center}
\tabcolsep=2pt
\begin{tabular}{|c|c|c|c|c||c|c|c|c|c||c|c|c|c|c||c|c|c|c|c|}
\hline
$N^{'}$&$x$&$y$&$z$&$t$&$N^{'}$&$x$&$y$&$z$&$t$&$N^{'}$&$x$&$y$&$z$&$t$&$N^{'}$&$x$&$y$&$z$&$t$\\
\hline
9&2&1&4&0& 
25&6&2&0&1& 
41&2&1&4&2& 
31&2&1&0&3\\
49&4&1&2&3&
71&2&1&2&3&
79&2&1&4&1&
81&2&2&4&3\\
89&2&3&4&2&
121&2&3&4&1&
151&0&1&2&2&
169&4&2&2&3\\
191&4&3&0&2&
239&4&1&0&2&
241&6&3&0&1&
271&2&3&2&3\\
279&2&3&2&1&
281&6&1&6&2&
289&0&3&0&0&
311&2&2&6&1\\
359&4&0&6&2&
361&6&0&6&0&
369&2&2&6&2&
401&0&3&0&2\\
409&0&1&4&3&
431&0&3&0&1&
439&0&0&6&3&
441&0&2&2&3\\
449&4&1&0&3&
479&4&3&6&2& 
521&4&3&4&1& 
529&6&1&0&0\\ 
569&2&1&6&2& 
601&2&1&6&0&
625&6&0&0&0& 
631&6&3&6&2\\
641&6&3&2&3&
711&2&1&6&1&
719&2&0&6&1&
729&4&3&0&3\\
751&6&1&2&1&
769&0&2&4&3&
761&4&2&0&3&
801&6&1&2&0\\
839&6&0&2&1&
841&2&0&6&0&
881&0&1&6&2&
911&0&2&6&1\\
929&0&3&6&3&
961&0&2&6&0& 
991&6&3&0&2&
1025&6&1&0&3
\\
\hline
\end{tabular} 
\end{center}
\caption{Prime numbers represented by $N^{'}$.}
\label{quadruplet1}
\end{table}
\begin{figure}[]
\begin{center}
\includegraphics[height=7cm,width=9cm]{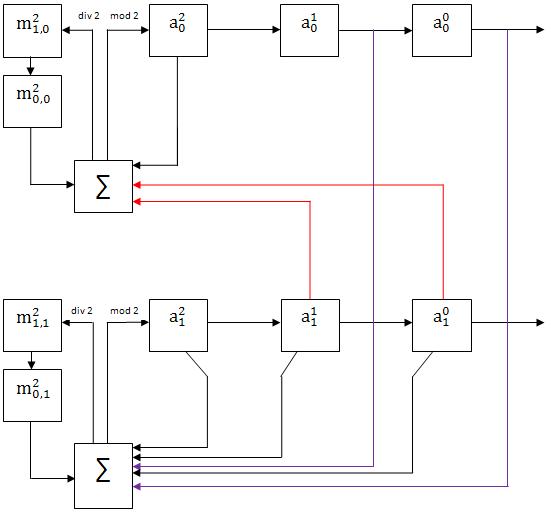}
\caption{Representation of $2$-VFCSR with $n=2$ and $N^{'}=151$.}
\label{d-VFCSR-151}
\end{center}
\end{figure}
\begin{table}[]
\begin{center}
\tabcolsep=2pt
\begin{tabular}{|c|c|c|c|c|c|c|c|c|c|c|c|c|c|c|c|c|c|c|c|c|c|c|c|c|c|c|c|c|c|c|c|c|c|c|c|c|c|}
\hline
$i$&0&&&&&&&&&&11&&&&&&&&&&&&&&&26&&&&&&&&&&\\
\hline
$a_{0}^{i}$&1&1&0&0&0&1&1&0&0&1&0&0&1&0&1&1&1&0&0&1&0&1&0&0&1&0&0&1&0&1&1&1&0&0&1&0\\
$a_{1}^{i}$&0&1&1&1&0&1&0&1&1&0&1&1&1&1&0&1&0&1&1&1&0&1&0&0&1&1&1&1&1&0&1&0&1&1&1&0\\
\hline
$m_{0}^{0}$&5&0&3&1&2&1&2&1&1&2&1&1&2&1&2&1&2&1&1&2&1&2&1&1&1&1&1&2&1&2&1&2&1&1&2&1\\
$m_{0}^{1}$&-1&4&1&4&1&3&1&3&2&2&2&2&2&3&2&3&2&3&2&3&2&3&2&2&1&2&2&2&3&2&3&2&3&2&3&2\\
$m_{1}^{0}$&0&3&1&2&1&2&1&1&2&1&1&2&1&2&1&2&1&1&2&1&2&1&1&1&1&1&2&1&2&1&2&1&1&2&1&2\\
$m_{1}^{1}$&4&1&4&1&3&1&3&2&2&2&2&2&3&2&3&2&3&2&3&2&3&2&2&1&2&2&2&3&2&3&2&3&2&3&2&3\\
\hline
\end{tabular}
\end{center}
\caption{$d$-VFCSR sequence for $N^{'}=151$.}
\label{sequence150}
\end{table}
\begin{figure}[]
\begin{center}
\includegraphics[height=7cm,width=9cm]{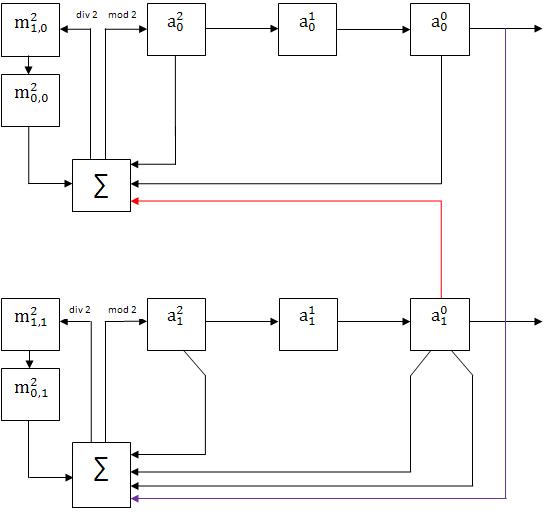}
\caption{Representation of $2$-VFCSR with $n=2$ and $N^{'}=401$.}
\label{d-VFCSR-401}
\end{center}
\end{figure}
\begin{figure}[]
\begin{center}
\includegraphics[height=7cm,width=9cm]{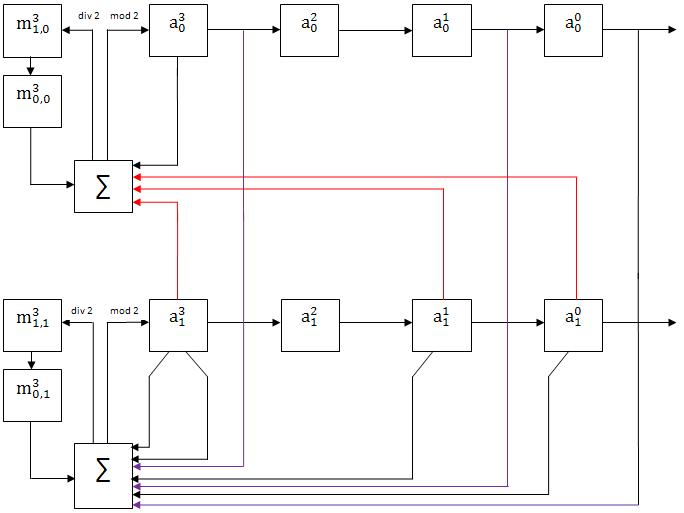}
\caption{Representation of $2$-VFCSR with $n=2$ and $N^{'}=409$.}
\label{d-VFCSR-409}
\end{center}
\end{figure}
\newpage
\section{Conclusion}
In this paper, we have presented a vectorial conception adapted to the $d$-FCSR in order to build $d$-FCSR on any finite fields. These registers generate eventually periodic sequences with the period divides $d.\mathtt{N}_{\mathbb{Q}}^{\mathbb{Q}(\pi,\beta)}(q)$. These registers generalize $p$-ary FCSR, $p$-ary $d$-FCSR and VFCSR and are a particular case of AFSR. In this paper, we give a complete description and a $p$-adic analysis of these registers using vectorial method and illustrating with simple examples. The results are easily implemented in software and hardware. We can easily generate longer sequences, just look for prime numbers represented by the $nd$-forms.   
\section*{Acknowledgments} The author thanks M. Zerzeri and A. Mokrane for their
comments on an earlier version of this paper.
\newpage   
\bibliographystyle{amsalpha}

\begin{thebibliography}{A}
\bibitem{Goresky1993} M. Goresky and A. Klapper: 2-adic shift registers,
Proceedings, Fast Software Encryption LNCS,
vol. 809, Springer Verlag, 1994. pp. 174-178.
\bibitem{Goresky1994} Mark Goresky, Andrew Klapper: Feedback Registers Based on Ramified Extensions of the 2-Adic Numbers (Extended Abstract). EUROCRYPT 1994: 215-222
\bibitem{marjane2010} A. Marjane and B. Allailou: Vectorial Conception of
FCSR, SETA 2010, in LNCS, vol. 6338, Springer Verlag (September
2010), pp. 240--252.
\bibitem{allailou2010} B. Allailou, A. Marjane and A. Mokrane: Design of a Novel Pseudo-Random Generator Based on Vectorial FCSRs, WISA 2010, in LNCS, 6513, Springer Verlag, pp. 76-91.
\bibitem{Goresky1997} M. Goresky and A. Klapper: Feedback shift registers,
combiners with memory, and 2-adic span, Journal of Cryptology, 10 (1997),
111-147.
\bibitem{klapper1999} Andrew Klapper, Jinzhong Xu: Algebraic Feedback Shift
Registers. Theor. Comput. Sci. 226(1-2): 61-92 (1999)
\bibitem{goresky2009} Mark Goresky, Andrew Klapper: Algebraic Shift Register
Sequences. http://www.cs.uky.edu/~klapper/algebraic.html (2009)
\end{thebibliography}

\end{document}